\pdfoutput=1

\documentclass[11pt]{article}

\usepackage[margin=2.5cm]{geometry}

\usepackage[T1]{fontenc}
\usepackage[utf8]{inputenc}
\usepackage{lmodern}

\usepackage{amsmath,amssymb,amsthm}
\usepackage{booktabs}
\usepackage{siunitx}
\usepackage{graphicx}
\usepackage{float}
\usepackage[numbers,square]{natbib}
\makeatletter

\renewcommand\@biblabel[1]{#1.}
\makeatother
\usepackage[hypertexnames=false,colorlinks=true,allcolors=blue]{hyperref}
\usepackage{url}
\usepackage{xcolor}

\newtheorem{proposition}{Proposition}

\graphicspath{{./}}

\hypersetup{
  bookmarksdepth=3,
  pdftitle={HierX: Fast Multi-Scale Distance-Decay Interaction on Million-Node Networks},
  pdfauthor={Alexander Hellervik, Joakim Bohlin, Claes Andersson},
  pdfsubject={Hierarchical sparse-plus-correction operators for distance-decay
    interaction fields on large spatial networks},
  pdfkeywords={hierarchical methods, spatial interaction, network accessibility,
    sparse operators, distance-decay kernels, O(n log n) complexity}
}

\title{\bfseries HierX: Fast Multi-Scale Distance-Decay Interaction\\
  on Million-Node Networks}

\author{%
  Alexander Hellervik\textsuperscript{1,*} \and
  Joakim Bohlin\textsuperscript{2} \and
  Claes Andersson\textsuperscript{1}
}

\date{\today}

\begin{document}

\maketitle

\begin{center}
\small
\textsuperscript{1}Department of Physical Resource Theory,
  Chalmers University of Technology, SE-412\,96 Gothenburg, Sweden\\
\textsuperscript{2}Department of Physics,
  Chalmers University of Technology, SE-412\,96 Gothenburg, Sweden\\[4pt]
\textsuperscript{*}Corresponding author:
  \href{mailto:alexander.hellervik@chalmers.se}{alexander.hellervik@chalmers.se}\\
\href{mailto:joakim.bohlin@chalmers.se}{joakim.bohlin@chalmers.se} \quad
\href{mailto:claeand@chalmers.se}{claeand@chalmers.se}
\end{center}

\begin{abstract}
\noindent
Many models across the sciences require global distance-decay
interactions on large sparse networks. Gravity models, accessibility
measures, spatial economic models, and network influence processes all
evaluate distance-weighted potential fields: each location accumulates
contributions from every other location, weighted by a decaying
function of the shortest-path travel cost between them. Computed
directly, such aggregate fields require the dense matrix of all
pairwise network costs, which scales quadratically in time and memory.
Common approximations either discard long-range contributions or fail
when interaction is governed by network distances rather than
geometric proximity.
We introduce HierX, a hierarchical sparse-plus-correction operator for
distance-decay potential fields on networks. HierX constructs
multi-scale representative layers with explicit correction terms
ensuring each location pair contributes exactly once at the finest
available resolution. Under bounded-growth assumptions common in
spatially embedded networks, applying HierX scales as $O(n \log n)$.
Systematic benchmarks confirm quasi-linear scaling to 100{,}000 nodes.
In head-to-head comparison with distance cutoff truncation and
Nystr\"om low-rank approximation on 25{,}000-zone networks, HierX
achieves 5--9\% RMSE at a fraction of the computational work;
Nystr\"om degrades severely on steep decay kernels. Case studies
compute population-weighted accessibility on the 2.58-million-node
Great Britain driving network and the 1.77-million-node London
pedestrian network: a one-time hierarchy construction
(${\sim}1$\,hour) yields a compact reusable operator that then
evaluates each national-scale accessibility field in under one second
(${\sim}700$\,ms for Great Britain, ${\sim}150$\,ms for London).
Open-source code and worked examples are provided.
\end{abstract}

\medskip
\noindent\textbf{Keywords:} hierarchical methods; spatial interaction;
network accessibility; sparse operators; distance-decay kernels;
$O(n \log n)$ complexity

\section*{Significance Statement}
Computing global distance-decay interactions on large networks is a
bottleneck across many disciplines: the cost scales quadratically with
network size, limiting models to networks far smaller than real-world
systems. We introduce HierX, a hierarchical operator that reduces this
cost to quasi-linear, resolving nearby interactions exactly while
approximating distant ones through multi-scale aggregation. From Great
Britain's 2.58-million-node road network, HierX builds a compact
multi-scale interaction operator in about an hour; once stored, the
operator evaluates national-scale accessibility surfaces in under one
second per scenario. This enables applications beyond the reach of
quadratic methods: accessibility analysis for transport and land-use
planning, gravity-model evaluation, and catchment studies, where many
scenarios reuse one network. An open-source
implementation makes these capabilities accessible.
 
\section{Introduction}
\label{sec:introduction}

Across many domains---transportation, spatial economics, ecology,
epidemiology, network science---models of large-scale systems share a
common structure: every node interacts with every other, but the
strength of interaction decays with
distance~\cite{tobler1970computer,batty2013new}. Urban
accessibility~\cite{hansen1959accessibility,geurs2004accessibility},
gravity-type spatial interaction
models~\cite{wilson1971family,fotheringham1989spatial}, and
spatial-economic models~\cite{ahlfeldt2015economics} all depend on dense
matrices of pairwise interactions defined over network-based costs.
We focus on transportation and spatial interaction as motivating
applications, but the formulation applies to any distance-decay kernel
on a sparse graph.

In all these cases, the underlying interaction structure can be
represented uniformly as:
\begin{equation}
  h_i = \sum_j f(c_{ij}) \, x_j,
  \label{eq:interaction}
\end{equation}
where $c_{ij}$ is a network-based shortest-path cost,
$f$ is a distance-decay function, $x_j$ is an attribute such as
population, activity, or neural state, and the output $h_i$ is the
resulting aggregate potential at node~$i$---in transport applications,
the accessibility of location~$i$.

Equation~\eqref{eq:interaction} is a recurring computational kernel
within these models rather than a complete representation of them.
Gravity and accessibility models may be singly or doubly constrained,
so that the interaction between $i$ and $j$ depends on competition
among all destinations, and applications such as congested network
loading require explicit paths rather than aggregate potentials. We
target the unconstrained kernel~\eqref{eq:interaction}, which appears
repeatedly as a building block---for example, inside the balancing
iterations of constrained models---and whose evaluation dominates the
cost at scale.

The main obstacle is scale. Computing and storing the full interaction
matrix $f(c_{ij})$ requires $O(n^2)$ time and memory in the number
of nodes~$n$. For realistic networks of $10^4$ to $10^6$ nodes, direct
computation remains possible but is practically limited to
supercomputing resources (terabytes of memory at $n = 10^6$), placing
it out of reach for routine modeling work. Common alternatives each have well-known
trade-offs: truncating interactions beyond a distance cutoff is simple
and efficient but discards long-range interaction mass---especially
problematic for slowly decaying kernels; low-rank and random-feature
approximations~\cite{williams2001nystrom,rahimi2007random} are
effective for positive-definite kernels in Euclidean settings but lack
guarantees for non-PSD shortest-path kernels on irregular graphs;
local kernels ($k$-nearest neighbors or adjacency-based
neighborhoods) preserve short-range structure but sacrifice global
reach. These methods can be excellent choices in their respective
regimes, but none simultaneously provides sub-quadratic computation,
global coverage, and accurate short-range interactions for network-based
distance-decay kernels.

We address this gap with a hierarchical interaction operator that
replaces the dense matrix with a multi-resolution decomposition,
achieving $O(n \log n)$ storage and computation under standard sparsity
and bounded-growth assumptions (Section~\ref{sec:complexity}).

\textbf{Intuitive overview.}
The hierarchical operator replaces the exhaustive $n^2$ pairwise
computation with a multi-resolution strategy: nearby locations are
treated individually with exact interactions, while distant locations
are grouped and interact through representatives. The network is
organized into increasingly coarse layers; short-range interactions,
which contribute most and are most sensitive to local structure, are
resolved exactly, while long-range interactions are approximated through
group representatives at a coarseness proportional to distance.
Correction matrices prevent double-counting across overlapping layers,
ensuring that each pair contributes once at the finest available
resolution.

\textbf{Contributions.} This paper presents an operator-level
formulation that adapts hierarchical interaction ideas to shortest-path
decay kernels on general sparse graphs. We integrate ideas from fast multipole and hierarchical matrix methods,
multi-scale spatial representations, and adaptive zoning into a unified
sparse-plus-correction operator. Our main contributions are:
\begin{enumerate}
\item \textbf{A domain-tailored hierarchical operator} for interaction
  kernels of the form~\eqref{eq:interaction} on graph-based
  shortest-path costs, expressed as
\begin{equation}
    H = \sum_k G_k^\top (F_k - \mathrm{Corr}_k) \, G_k,
    \label{eq:operator}
  \end{equation}
where $G_k$ maps nodes to representatives at layer~$k$, $F_k$ is the
  sparse interaction matrix at that layer, and $\mathrm{Corr}_k$
  subtracts coarser-layer contributions to ensure each node pair is
  counted exactly once.

\item \textbf{A practical construction of a multi-layer hierarchy on
  graphs}, including representative selection, grouping, and sparse
  storage, designed for geographically embedded networks.

\item \textbf{A complexity analysis} showing that, under standard
  sparsity and growth assumptions, application scales as $O(n \log n)$
  and local network changes can be propagated through the hierarchy
  with work proportional to the number of affected pairs times the
  hierarchy depth.

\item \textbf{An applied and reproducible formulation}, with open-source
  code and worked examples aimed at lowering the barrier to
  hierarchical interaction methods.
\end{enumerate}

Unlike hierarchical matrix and fast multipole frameworks, which rely on
geometric admissibility and low-rank block compression in Euclidean
spaces, our operator acts directly on shortest-path metrics of arbitrary
sparse graphs, targeting large-scale spatial interaction computation
without requiring geometric embedding or positive-definiteness.

\textbf{Scope and intended niche.} The operator targets the regime where
(i)~the network is large enough ($n \gtrsim 10^4$) that the quadratic
cost of dense computation becomes prohibitive,
(ii)~the decay function is shallow enough
that distant interactions carry meaningful mass, (iii)~the application
requires repeated evaluations of $H\mathbf{x}$ for varying~$\mathbf{x}$,
amortizing the build cost, and (iv)~the graph is a sparse, roughly
spatial network with non-negative edge costs.

\textbf{Paper outline.}
The remainder of the paper proceeds as follows.
Section~\ref{sec:background} reviews the hierarchical lineage from
cellular automata and adaptive zoning, and situates our approach
relative to existing hierarchical and graph-based methods.
Section~\ref{sec:methods} then introduces the hierarchical operator
itself, including the construction of representative layers and the
sparse-plus-correction formulation. Particular attention is paid to the
algorithmic choices that make the method practical on large
transport-like networks, including representative selection and grouping.
Sections~\ref{sec:complexity}--\ref{sec:experiments} analyze
computational complexity, approximation error, and empirical behavior
in realistic network settings, including head-to-head comparisons with
distance cutoff truncation and Nystr\"om approximation.
Section~\ref{sec:discussion} discusses limitations, the method's
comparative niche, and directions for future work.
 \section{Background and Lineage}
\label{sec:background}

\subsection{Hierarchical cell-space and the 2002 CA model}

This work builds on research originating in cellular automata (CA) and
complex systems. Andersson et al.~\cite{andersson2002urban} introduced a
CA\slash Markov random field model for urban settlement dynamics in
which each cell is influenced by the entire lattice, with interaction
strengths decaying with distance. A key motivation was what might be called the ``speed of
light'' problem in geographical CA (cf.~\cite{andersson2002urban}):
finite neighborhoods artificially couple spatial and temporal scales.

To overcome this, the 2002 model introduced a hierarchical
cell-space\slash mean-field renormalization scheme in which the lattice
is recursively aggregated into larger blocks carrying average states.
Interactions at larger distances are computed between aggregated blocks,
reducing the cost from $O(N^2)$ to approximately $O(N \log N)$ while
retaining fine-scale detail locally and providing a multi-scale
representation of space in which distant influences are represented more
coarsely than local ones.

\subsection{Variable-grid CA and adaptive zoning}

Several authors subsequently generalized the hierarchical cell-space
idea. White~\cite{white2006modeling} recast it as a variable-grid CA
where distant rings are represented by larger ``super-cells''; Van Vliet
et al.~\cite{vanvliet2009modeling} implemented this for urban growth
modeling. Outside the CA framework, Hagen-Zanker and
Jin~\cite{hagenzanker2012adaptive} proposed adaptive zoning for
gravity-type spatial interaction models, aggregating distant destination
zones into larger regions. They note the approach could extend beyond
Euclidean distance but do not formulate a general graph-theoretic
operator.

Variable-grid CA and adaptive zoning are thus spatial siblings of the
present work: they express the idea that spatial detail can be coarser at
longer distances, but remain tied to raster grids or zonal systems.

\subsection{Connections to existing hierarchical and graph-based methods}
\label{sec:related}

We briefly describe how the proposed operator relates to existing
hierarchical and multi-scale methods.

\paragraph{Hierarchical matrices and fast multipole methods.}
The Barnes--Hut treecode~\cite{barneshut1986} and fast multipole
methods~\cite{greengard1987fast} (FMM) approximate far-field
interactions via hierarchical grouping and series expansions, while
$\mathcal{H}$-matrices~\cite{hackbusch1999sparse,hackbusch2015hierarchical,borm2003introduction}
compress well-separated blocks in low rank; together these reduce
$N$-body interactions from $O(N^2)$ to $O(N \log N)$ or $O(N)$. Our
sparse-plus-correction form~\eqref{eq:operator} shares this near/far
decomposition, but there are important differences:
\begin{itemize}
\item We work with \textbf{shortest-path distances on general graphs},
  not with kernels defined on geometric point sets in Euclidean space.
\item We approximate far-field interactions via \textbf{group
  representatives and precomputed costs}, rather than via low-rank
  factorizations of block submatrices.
\item We focus on \textbf{domain-specific interaction kernels} common in
  transport and spatial interaction modeling, rather than on general
  elliptic PDE operators.
\item We emphasize \textbf{dynamic updates to the underlying graph}
  (adding nodes, links, or changing costs), which are less central in
  classical $\mathcal{H}$-matrix and FMM settings.
\end{itemize}
Our work is thus an applied, graph-oriented sibling of these methods
rather than a competitor to general $\mathcal{H}$-matrix or FMM
frameworks. The structural parallel is closest to $\mathcal{H}$-matrices,
which treat admissible blocks independently. The
$\mathcal{H}^2$-matrix refinement~\cite{hackbusch2015hierarchical}
achieves optimal complexity by expressing cluster bases hierarchically
(nested bases), and FMM analogously translates expansions between
levels of its tree. Our representative layers are in fact already
nested in the set-theoretic sense---the propagation step of
Section~\ref{sec:rep-selection} guarantees that each fine-layer group
lies within a coarse-layer group---but the operator does not yet
exploit this nestedness algebraically: each layer stores its
representative costs independently. Two $\mathcal{H}^2$-inspired
refinements appear promising and are discussed as future work in
Section~\ref{sec:discussion}: translating cost information between
nested layers rather than recomputing it, and replacing the hard
single-representative assignment with a weighted average (convex
combination) of several representatives, analogous to
interpolation-based and variable-order
$\mathcal{H}^2$ constructions. Conversely, the correction-matrix
mechanism---which lets overlapping scales coexist while guaranteeing
exactly-once contribution without geometric admissibility
criteria---may be of interest in continuous settings.

\paragraph{Graph Laplacians, spectral sparsification, and diffusion kernels.}
Spectral sparsification~\cite{spielman2011spectral} approximates a
graph by a sparse subgraph preserving Laplacian quadratic forms.
Separately, graph diffusion
kernels~\cite{kondor2002diffusion} define
interactions via the matrix exponential of the negative Laplacian.
Both address a different
problem from ours: our focus is on interaction sums
$h_i = \sum_j f(c_{ij}) \, x_j$ where $f$ is a distance-decay function
of shortest-path costs, not a function of the graph Laplacian.

\paragraph{Distance oracles, hierarchical routing, and hub labeling.}
Distance oracles~\cite{thorup2005approximate}, contraction
hierarchies~\cite{geisberger2012exact}, transit node
routing~\cite{bast2007transit}, and hub
labeling~\cite{abraham2011hub} (see~\cite{bast2016route} for a survey)
precompute data structures for fast single-pair shortest-path queries.
Our objective is different: we compute, for all nodes $i$, a global
interaction sum $h_i = \sum_j f(c_{ij}) \, x_j$, which is closer to
applying a full operator than answering individual queries. The two
families scale differently---contraction hierarchies excel at
one-to-one queries, whereas Dijkstra's algorithm~\cite{dijkstra1959} is
naturally one-to-many---which suggests a complementary division of labor within
our hierarchy: fine layers require many short-range one-to-many
searches, for which cutoff-limited Dijkstra is well suited, while
coarse layers require costs between relatively few, widely separated
representative pairs, exactly the regime where contraction-hierarchy
or hub-label queries excel. Such structures could thus serve as a
backend for computing the coarse-layer entries of $F_k$; we do not
pursue this connection here.

\paragraph{Graph coarsening and multi-scale graph representations.}
Hierarchical graph pooling methods such as
DiffPool~\cite{ying2018hierarchical} learn to map nodes to
progressively coarser clusters for graph classification and related
tasks. Our hierarchy superficially resembles coarsening, but we do not
operate on a reduced graph: instead, we approximate a specific
interaction operator and always map back to the full node set.
Techniques from graph coarsening could nonetheless inform alternative
representative selection strategies.

\paragraph{Kernel approximation methods.}
Random Fourier features~\cite{rahimi2007random} and Nystr\"om
approximation~\cite{williams2001nystrom} provide $O(n)$ or
$O(n \log n)$ kernel matrix approximations with theoretical guarantees.
However, Nystr\"om error bounds rely on the kernel being positive
semi-definite (PSD). Interaction kernels defined on shortest-path
costs over general graphs need not be PSD, and in our baseline
comparisons (Section~\ref{sec:baselines}) standard Nystr\"om applied
to such kernels fails on steep
decay functions, consistent with this limitation. Our approach
targets a different regime: arbitrary distance-decay functions on graph
metrics, where the graph structure itself dictates the hierarchy.

\subsection{From spatial hierarchies to general network operators}

Existing work does not generalize the hierarchical idea to arbitrary
graph structures; does not define a clean algebraic operator; and does
not address dynamic structural updates. The present method lifts the
hierarchical cell-space idea from regular grids to general sparse
networks with non-negative edge costs, defines a linear operator in
sparse-plus-correction form, and provides a conceptual framework for
dynamic updates with $O(\log n)$ propagation depth.
 \section{Methods: The Hierarchical Operator}
\label{sec:methods}

\subsection{Problem setting}

We consider an undirected weighted graph $G = (V, E, w)$ with
$|V| = n$ nodes and non-negative edge costs. A designated subset
$Z \subseteq V$ of \emph{zones} carries the activity and accessibility
values of interest; the remaining nodes serve only as routing
intermediaries. We write $|Z| = n_z$. Let $c_{ij}$ be the
shortest-path cost between zones $i$ and $j$ (computed over the full
graph), and $f$ a distance-decay function. The dense interaction matrix
is $F_{ij} = f(c_{ij})$, $i,j \in Z$. Given an activity
vector~$\mathbf{x}$, the dense accessibility is
\begin{equation}
  h_i = \sum_{j \in Z} F_{ij} \, x_j.
  \label{eq:dense-matvec}
\end{equation}
Our aim is to approximate $F$ with a hierarchical operator $H$ that is
much cheaper to evaluate while retaining exact contributions for zone
pairs whose costs are known at fine resolution. In the special case
$Z = V$, every node is a zone; all benchmarks in this paper use this
setting, which represents the worst case for the operator.

\subsection{Hierarchical layers and representatives}

We build $K$ layers. Throughout, shortest-path costs are computed with
Dijkstra's algorithm~\cite{dijkstra1959}, the classical method that
explores a network outward from a source node in order of increasing
cumulative cost; it naturally supports one-to-many searches that stop
once a cost cutoff is reached. At layer~$k$, radius $\rho_k$ defines the grouping
scale, and the overlap factor $\alpha > 1$ extends the cutoff to
$\alpha \cdot \rho_k$, controlling how far each layer's Dijkstra searches reach
and thus the resolution of the approximation.
Representatives $R_k$ are chosen so that each node belongs to exactly
one representative at layer~$k$. $G_k$ is the group-membership matrix.

At the finest layer, $R_0 = V$. At coarser layers, $|R_k|$ becomes
progressively smaller.

\subsection{Representative selection and grouping}
\label{sec:rep-selection}

Representatives and groups are constructed by greedy clustering, layer
by layer from fine to coarse, given a base radius~$\rho_0$, growth
factor $\gamma > 1$ ($\rho_{k+1} = \gamma \cdot \rho_k$), and overlap
factor $\alpha > 1$ (cutoff $\alpha \cdot \rho_k$).

At layer $k = 0$, $R_0 = V$ and each node represents itself. For
$k \ge 1$, the candidates are $R_{k-1}$ and the algorithm proceeds:
\begin{enumerate}
\item Set $\rho_k = \rho_0 \cdot \gamma^k$ and
  cutoff $\alpha \cdot \rho_k$. Initialize $R_k = \emptyset$.
\item For each candidate $u \in R_{k-1}$ (ordered by index):
  \begin{itemize}
  \item If some $r \in R_k$ satisfies $c_{ur} < \rho_k / 2$,
    assign~$u$ to the nearest such~$r$.
  \item Otherwise, promote $u$ to~$R_k$ and reassign any previously
    grouped candidate~$v$ for which $c_{vu} < c_{v,r_k(v)}$
    (\textbf{takeover}).
  \end{itemize}
\item Every member of $R_{k-1}$ now belongs to exactly one group at
  layer~$k$.
\end{enumerate}
The takeover step keeps groups compact by letting a new representative
steal nearby nodes from other groups. Because candidates are drawn from
$R_{k-1}$, $|R_k|$ shrinks rapidly with~$k$.

\paragraph{Group propagation and nesting.}
A coarse-to-fine propagation step ensures every node has a representative
at every layer: if node~$i$ maps to~$r$ at layer~$k$ and $r$ maps
to~$r'$ at layer~$k+1$, then $i$ maps to~$r'$ at layer~$k+1$. This
guarantees \emph{nested} groups---each fine-layer group lies entirely
within a coarse-layer group---which the correction matrices require.
Group membership is recorded in $G_k[r, i] = 1$.

\paragraph{Cost computation.}
For each layer~$k$, we run cutoff-limited Dijkstra searches from each
representative $r \in R_k$, storing costs from $r$ only to other
representatives $u \in R_k$ reached within $\alpha \cdot \rho_k$. These
costs populate the sparse structure $C_k$. Applying the distance-decay
function $f$ to $C_k$ yields the interaction structure $F_k$ at
layer~$k$. For the coarsest layer, all representative pairs are stored
(no cutoff).

In practice this is efficient because $|R_k|$ shrinks rapidly with~$k$,
the cutoff $\alpha \cdot \rho_k$ limits each Dijkstra search, and in
spatial networks the node count within $\alpha \cdot \rho_k$ remains
modest. We adopt this greedy algorithm for its transparency and spatial
intuition; more sophisticated coarsening schemes could be substituted.

\begin{figure}[t]
  \centering
  \includegraphics[width=\linewidth]{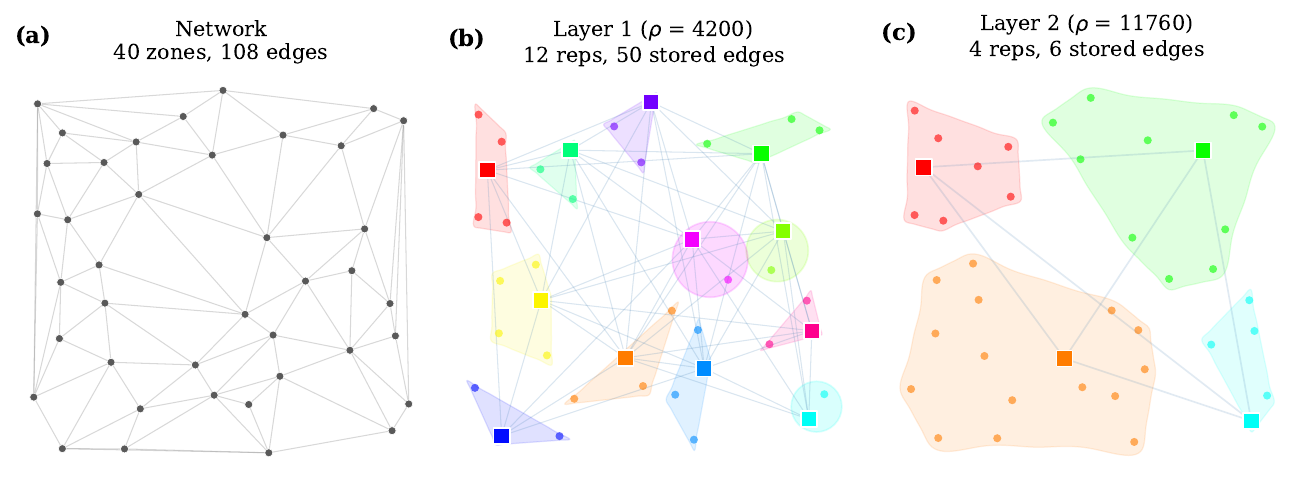}
  \caption{Hierarchy construction on a 40-zone random spatial network.
    (a)~Input graph. (b)~Layer~1 ($\rho = 4{,}200$): 12 representatives
    (squares), each heading a group (shaded region). (c)~Layer~2
    ($\rho = 11{,}760$): 4 representatives with coarser groups. Stored
    edges connect representative pairs within the cutoff
    $\alpha \cdot \rho_k$.
    \textit{Alt text:} Three panels showing hierarchy construction on a
    40-node random spatial graph, progressing from the raw network to 12
    representatives with colored groups to 4 coarser representatives with
    larger group regions.}
  \label{fig:hierarchy-construction}
\end{figure}

\subsection{The hierarchical operator}

Applying the operator to a vector $\mathbf{x}$ proceeds in three steps
at each layer~$k$:
\begin{enumerate}
\item \textbf{Aggregate:} $\mathbf{x}_k = G_k \, \mathbf{x}$
\item \textbf{Interact:}
  $\mathbf{c}_k = (F_k - \mathrm{Corr}_k) \, \mathbf{x}_k$
\item \textbf{Expand:}
  $\mathbf{h} = \sum_k G_k^\top \mathbf{c}_k$
\end{enumerate}
This corresponds to the operator
\begin{equation}
  H = \sum_k G_k^\top (F_k - \mathrm{Corr}_k) \, G_k,
  \label{eq:operator-full}
\end{equation}
which requires $O(n \log n)$ operations under the assumptions in
Section~\ref{sec:complexity}. The correction matrices
$\mathrm{Corr}_k$ are needed because a node pair $(i, j)$ may be
represented at multiple layers: if nodes $i$ and $j$ belong to
representatives that are close enough to appear in both a fine layer
and a coarser layer, both $F_k$ and $F_{k'}$ will contain a nonzero
entry for the corresponding representative pair. Simply summing
$G_k^\top F_k G_k$ over all layers would double-count such pairs;
$\mathrm{Corr}_k$ removes this redundancy.
Each $\mathrm{Corr}_k$ is constructed as follows: for each nonzero entry
$(r, s)$ in $F_k$, we look up their representatives at each coarser
layer~$k'$. If the coarser pair $(r', s')$ also has a stored interaction
in $F_{k'}$, we set $\mathrm{Corr}_k[r, s] = F_{k'}[r', s']$, taking
the value from the nearest coarser layer. Because we iterate only over
the nonzero entries of the sparse matrices $F_k$, the total work is
$O(E_{\text{total}} \cdot K)$, where $E_{\text{total}}$ is the total
number of stored interactions across all layers.

The effect is that each layer contributes only the \emph{difference}
between its own interaction value and the next coarser layer's value for
the same pair. These differences telescope across layers, so that only
the finest-layer value survives.

\begin{proposition}[Exactly-once contribution]
\label{prop:exactly-once}
For every node pair $(i, j)$ that appears in at least one layer, let
$\hat{k}$ be the finest layer at which the representative pair
$(\hat{\imath},\, \hat{\jmath}) = \bigl(r_{\hat{k}}[i],\, r_{\hat{k}}[j]\bigr)$
has a stored interaction in $F_{\hat{k}}$. Then $(i, j)$ contributes
exactly $f(c_{\hat{\imath}\hat{\jmath}})$ to $H$---once, at the finest
available resolution.
\end{proposition}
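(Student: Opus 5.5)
Fix a node pair $(i,j)$ and expand the $(i,j)$ entry of the operator~\eqref{eq:operator-full}. Since $G_k[r,i]=1$ exactly when $r=r_k[i]$, each node has a unique representative at every layer. Hence
\begin{equation*}
  H_{ij} \;=\; \sum_{k} \bigl(F_k - \mathrm{Corr}_k\bigr)\bigl[r_k[i],\, r_k[j]\bigr],
\end{equation*}
so the entry reduces to a sum over layers of one scalar per layer.

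Next, let $S=\{k : (r_k[i],r_k[j]) \text{ is stored in } F_k\}$, listed in increasing order as $k_1=\hat k<k_2<\dots<k_m$. For $k\notin S$ I want the whole summand to vanish. The $F_k$ term is zero by definition. For the correction term, the construction only writes $\mathrm{Corr}_k[r,s]$ at nonzero positions of $F_k$, so it is zero too. Here I will treat ``stored'' as meaning a structural nonzero, so that a stored entry with $f=0$ is harmless either way.

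For $k=k_\ell\in S$, I need to identify the correction value. The nesting guaranteed by the propagation step gives $r_{k'}[r_k[i]]=r_{k'}[i]$ for $k'>k$. The coarser pair looked up by the correction construction is therefore exactly $(r_{k'}[i],r_{k'}[j])$, and the nearest coarser layer where it is stored is $k_{\ell+1}$. This gives $\mathrm{Corr}_{k_\ell}=F_{k_{\ell+1}}[\cdot]$ for $\ell<m$, and $\mathrm{Corr}_{k_m}=0$. The sum then telescopes:
\begin{equation*}
  H_{ij}=\sum_{\ell=1}^{m}\bigl(F_{k_\ell}-F_{k_{\ell+1}}\bigr)+F_{k_m}=F_{\hat k}\bigl[\hat\imath,\hat\jmath\bigr]=f(c_{\hat\imath\hat\jmath}),
\end{equation*}
where the $\ell=m$ summand is read as $F_{k_m}$ alone, since its correction is zero. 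This is exactly the claimed single contribution at the finest available layer. If the pair appears in no layer, every term is zero. The diagonal case $i=j$ and the case where $i,j$ share a representative, giving $c_{rr}=0$, go through verbatim.

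The main obstacle is the identification step above. It rests on two facts. The first is that the representative of a representative equals the representative of the original node; this is exactly the nesting property from the propagation step, which I would state as a small lemma by induction on $k'-k$. The second is that ``nearest coarser layer with a stored entry'' coincides with the next element of $S$, which is immediate once nesting holds. I would also check that the takeover reassignments happen before propagation, so that nesting holds for the final $G_k$ and not only for intermediate assignments.
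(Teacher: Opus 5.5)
Your proposal is correct and follows the paper's telescoping argument. Two of your steps make explicit what the paper leaves implicit when it says the correction ``subtracts the nearest coarser layer's value'': the reduction of $H_{ij}$ to one scalar per layer, and the nesting lemma $r_{k'}[r_k[i]]=r_{k'}[i]$. There is one small index slip. In your displayed sum, the upper limit should be $m-1$, or you should drop the trailing $+F_{k_m}$ and use the convention $F_{k_{m+1}}=0$. As written, $F_{k_m}$ is counted twice.
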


\begin{proof}
Let $\hat{k} < k_1 < \cdots < k_m$ be the layers at which the
representative pair for $(i, j)$ has a stored interaction, ordered from
finest to coarsest. At layers not in this set, the pair is absent from
$F_\ell$ and contributes zero. At each layer in the set, the correction
subtracts the nearest coarser layer's value:
\begin{itemize}
\item at the coarsest layer $k_m$, no coarser layer stores the pair, so
  $\mathrm{Corr}_{k_m} = 0$ and the contribution is $F_{k_m}$;
\item at each finer layer $k_j$ ($j < m$), the correction subtracts the
  value from $k_{j+1}$, giving contribution $F_{k_j} - F_{k_{j+1}}$.
\end{itemize}
Summing across all layers in the set:
\[
  (F_{\hat{k}} - F_{k_1}) + (F_{k_1} - F_{k_2}) + \cdots
  + (F_{k_{m-1}} - F_{k_m}) + F_{k_m} = F_{\hat{k}},
\]
where each $F_\ell$ is shorthand for
$F_\ell\bigl[r_\ell[i],\, r_\ell[j]\bigr]$. The telescoping sum yields
exactly the finest-layer value, contributed once.
\end{proof}

\noindent
While one could look up $\hat{k}$ for each pair individually, the
correction formulation~\eqref{eq:operator-full} decomposes the operator
into sparse matrix products that can be evaluated layer by layer,
enabling efficient matvec without per-pair lookup.

 \section{Complexity Analysis}
\label{sec:complexity}

In this section we summarize the computational work required to build and
apply the hierarchical operator, and to update it after local
changes to the underlying network. We emphasize that these estimates are
\textbf{conditional on structural assumptions} that hold in many spatial
and transport networks, but not in all graphs.

We denote:
\begin{itemize}
\item $n$ = number of nodes (zones),
\item $m$ = number of edges,
\item $K$ = number of hierarchical layers,
\item $W_k$ = average work for a cutoff-limited
  shortest-path search at layer~$k$ (cutoff $\alpha \cdot \rho_k$),
\item $d_{\mathit{avg}}$ = average number of stored neighbors per
  representative (sparsity factor),
\item $\Delta$ = number of origin--destination pairs for which a local
  change improves the shortest path (``impact size'').
\end{itemize}

\subsection{Structural assumptions}

We do not claim worst-case guarantees over all graphs. Our analysis and
design are tailored to \textbf{sparse, roughly spatial
networks}~\cite{barthelemy2011spatial} of the kind found in transport
and spatial interaction modeling. In particular
we assume:
\begin{enumerate}
\item \textbf{Sparse network:} the graph is sparse, with $m = O(n)$,
  and running Dijkstra's algorithm from a single source requires
  $O(m \log n) \approx O(n \log n)$, or better with specialized
  implementations.
\item \textbf{Controlled local growth:} the number of nodes within a
  ball of radius~$r$ around any node grows at most polynomially in~$r$
  (for example, as in road networks embedded in low-dimensional space).
  This alone does not bound coarse-layer search volume; the near-linear
  overall scaling arises because the number of sources $S_k$ shrinks at
  the same rate that per-source work $W_k$ grows, keeping the product
  $S_k \cdot W_k$ roughly constant across layers (see below).
\item \textbf{Bounded representative degree:} at each layer~$k$, the
  number of stored interactions per representative, $d_{\mathit{avg}}$,
  remains bounded or grows slowly with $n$, due to the use of cutoffs
  and the decay of $f$ with distance.
\item \textbf{Logarithmic depth:} the number of layers $K$ grows slowly
  with system size, typically $K = O(\log n)$ when radii $\rho_k$ increase
  geometrically.
\end{enumerate}
Under these assumptions we expect near-linear or $n \log n$ scaling.
The assumptions can fail---for dense graphs, or for networks with
abundant long-range shortcuts---and then the work can be higher.
Importantly, such structures are not mere mathematical artefacts:
hub-and-spoke systems common in transport (public transport, airline,
or micromobility networks) concentrate long-range connectivity in a
few nodes and can violate the controlled-growth assumption. We return
to this limitation in Section~\ref{sec:discussion}.

\subsection{Building the hierarchy}

Constructing the hierarchy involves selecting representatives $R_k$ at
each layer~$k$ and computing shortest-path costs from each
representative to other nodes within $\alpha \cdot \rho_k$.
If $S_k$ is the number of sources at layer~$k$, each requiring
work~$W_k$, the total work is
\[
  T_{\text{hierarchy}} = \sum_k S_k \cdot W_k.
\]
Because $S_k$ decreases as $k$ increases while $W_k$ grows,
$S_k \cdot W_k$ remains roughly constant across layers, yielding
$T_{\text{hierarchy}} \approx O(n \cdot K_{\text{eff}})$,
where $K_{\text{eff}}$ captures the effective number of representative
runs. Empirically, behavior is close to linear in~$n$.

\subsection{Building interactions and corrections}

Transforming costs $C_k$ into interactions $F_k$ and constructing
correction matrices $\mathrm{Corr}_k$ requires iterating over stored
representative pairs. The total number of stored entries is
$E_{\text{total}} = \sum_k E_k = O(n \cdot K \cdot d_{\mathit{avg}})$,
since $E_k \propto |R_k| \cdot d_{\mathit{avg}}$ and $|R_k|$ shrinks
with~$k$.

Correction construction visits each fine-layer entry and searches coarser
layers until a correction is found.  In the worst case each entry is
visited $K$ times, giving
$T_{\text{interaction}} = O(n \cdot K^2 \cdot d_{\mathit{avg}})$;
in practice early termination brings this closer to
$O(n \cdot K \cdot d_{\mathit{avg}})$. With $K \approx \log n$ and
modest $d_{\mathit{avg}}$, the effective scaling is $O(n \log n)$.

\subsection{Applying the operator}

Applying the operator $H$ to a vector $\mathbf{x}$ consists of:
\begin{enumerate}
\item aggregating $\mathbf{x}$ to representatives at each layer:
  $\mathbf{x}_k = G_k \, \mathbf{x}$,
\item computing
  $\mathbf{c}_k = (F_k - \mathrm{Corr}_k) \, \mathbf{x}_k$, and
\item expanding and summing:
  $\mathbf{h} = \sum_k G_k^\top \mathbf{c}_k$.
\end{enumerate}
Aggregation and expansion each require $O(n \cdot K)$ work; the dominant
cost is the sparse matrix--vector multiplies whose total nonzeros are
$E_{\text{total}}$. Thus:
\begin{equation}
  T_{\text{apply}} = O(E_{\text{total}}) + O(n \cdot K)
  \approx O(n \cdot K \cdot d_{\mathit{avg}}).
  \label{eq:apply-cost}
\end{equation}
With $K \approx \log n$ and modest $d_{\mathit{avg}}$,
$T_{\text{apply}} \approx O(n \log n)$ in the transport-like networks
that motivate our work.

\subsection{Dynamic updates}
\label{sec:dynamic-updates}

The hierarchy admits an incremental update framework in which update
propagation is proportional to the number of affected pairs times
the hierarchy depth, rather than to overall network size. A detailed
analysis is provided in the Supplementary Information.

\subsection{Summary}

Under standard sparsity and growth assumptions for spatial and transport
networks, building and applying the operator scale as $O(n \log n)$.
Once affected pairs are identified, update propagation scales with
$\Delta \cdot K$ rather than with $n$
(Supplementary Information). We make no universal worst-case
guarantees; our claims are intended for sparse, geographically embedded
networks.
 \section{Approximation Error}
\label{sec:error}

The hierarchical operator is an approximation to the dense interaction
kernel defined by~\eqref{eq:interaction}, where $c_{ij} = c(i, j)$ is
the exact shortest-path cost between nodes $i$ and $j$. In this section
we outline the main sources of approximation error and how they relate
to design parameters of the hierarchy.

At a high level, errors arise from two choices:
\begin{enumerate}
\item \textbf{Grouping error:} nodes within the same group at a given
  layer~$k$ share a representative $r_k[i]$. When we approximate
  interactions at that layer using representative costs
  $c(r_k[i], r_k[j])$ rather than the exact node-to-node costs
  $c(i, j)$, we incur a discrepancy that depends on how far nodes
  deviate from their representatives.
\item \textbf{Truncation across layers:} interactions at a given scale
  are only stored within a cutoff range $\alpha \cdot \rho_k$. Beyond this
  range, interactions are delegated to coarser layers. This can slightly
  distort how interaction mass is partitioned across scales, but by
  construction each origin--destination pair is still accounted for at
  some layer, so we do not ``drop'' interactions entirely.
\end{enumerate}

A simple way to quantify grouping error is to bound the diameter of
groups. Suppose that at layer~$k$ the maximum distance between any
node~$i$ and its representative $r_k[i]$ is at most $\delta_k$:
\[
  c(i, r_k[i]) \le \delta_k \quad \forall\, i.
\]
If the decay function $f$ is Lipschitz-continuous\footnote{The Lipschitz
assumption effectively requires $f$ to have no singularity at zero cost;
all standard decay functions with a positive offset satisfy it.} with
constant $L_f$, i.e.\ $|f(x) - f(y)| \le L_f \cdot |x - y|$, then for
any node pair
$(i, j)$ represented at layer~$k$ we have
\begin{equation}
  |f(c(i,j)) - f(c(r_k[i], r_k[j]))| \le L_f \cdot
  |c(i,j) - c(r_k[i], r_k[j])|.
  \label{eq:lipschitz-bound}
\end{equation}
A concrete bound on the distance discrepancy follows from the triangle
inequality on the shortest-path metric. Since $c$ satisfies the
triangle inequality on undirected graphs with non-negative edge costs:
\begin{equation}
  |c(i,j) - c(r_k[i], r_k[j])| \le c(i, r_k[i]) + c(j, r_k[j])
  \le 2\delta_k.
  \label{eq:triangle-bound}
\end{equation}
Combining~\eqref{eq:lipschitz-bound} and~\eqref{eq:triangle-bound}
gives a per-entry error bound:
\begin{equation}
  |f(c(i,j)) - f(c(r_k[i], r_k[j]))| \le 2\,L_f\,\delta_k.
  \label{eq:combined-bound}
\end{equation}
This bound is simple but always valid: the per-entry interaction error
at layer~$k$ is controlled by twice the product of the Lipschitz
constant of $f$ and the maximum group radius $\delta_k$.

\paragraph{Near-field and far-field error regimes.}
For power-law decay $f(c) = (c + \delta_0)^{-\beta}$ the global
Lipschitz constant $L_f = \beta\,\delta_0^{-\beta-1}$ can be very
large, making~\eqref{eq:combined-bound} appear loose. The hierarchical
structure resolves this naturally. At the finest layer ($k = 0$) every
node represents itself, so pairs resolved there incur \emph{zero}
grouping error. Pairs resolved at layer $k \ge 1$ were not within the
cutoff at layer~$k{-}1$, so their true costs exceed
$\alpha \cdot \rho_{k-1} - 2\delta_{k-1}$. In this range the relevant
Lipschitz constant is not the global one but the maximum slope of $f$
over large costs, which decreases rapidly as the decay function
flattens. The bound~\eqref{eq:combined-bound} thus splits into a
near-field regime (exact) and a far-field regime (bounded by the
\emph{local} smoothness of $f$). For exponential decay the global
Lipschitz constant $1/\lambda$ is already bounded, and the distinction
is less critical.

\subsection{Practical guidelines and empirical characterization}

The bounds above yield two practical guidelines:
\begin{itemize}
\item Strongly distance-sensitive kernels (for example with very steep
  decay) have smaller effective $L_f$ in the relevant cost range,
  requiring less aggressive hierarchy parameters to keep errors low.
\item Increasing the overlap factor~$\alpha$ and choosing base radius
  $\rho_0$ commensurate with the network's mean edge cost reduces group
  diameters at any given scale, at the cost of more computation and
  memory.
\end{itemize}
The bounds above control the \emph{magnitude} of the error but not its
\emph{sign}. The sign structure matters in practice: because decay
functions are convex, symmetric distance errors translate into
interaction errors with a positive mean (Jensen's inequality), so the
operator systematically overestimates interactions in the far field. A
second-order analysis of this bias is given in the SI, and
Section~\ref{sec:error-anatomy} quantifies it empirically at national
scale.

Because the theoretical bounds involve quantities ($\delta_k$, $L_f$)
that depend on graph geometry, they are most useful as qualitative
guides. We therefore complement them with systematic \textbf{empirical
error characterization}: in Section~\ref{sec:experiments} we compare
$H\mathbf{x}$ against the dense baseline $F\mathbf{x}$ across 1{,}800
parameter configurations on synthetic networks
(error below 2.5\% with $\rho_0 = 3\bar{c}$ and $\alpha = 2$,
SI~Table~S3), and dissect distance-resolved error and bias on the
full-scale GB case study (Section~\ref{sec:error-anatomy}).
 \section{Computational Experiments}
\label{sec:experiments}

We validate the operator's scaling behavior and approximation quality
through systematic benchmarks. All experiments were run on a single
machine with an AMD Ryzen Threadripper PRO 7985WX processor (64~cores,
128~hardware threads) and 754\,GB RAM, using Python~3.12,
NumPy~2.4.2, SciPy~1.17.1, NetworkX~3.6.1, and the
HierX package (v0.1.1). The hierarchical operator
uses SciPy's Dijkstra on CSR matrices with 32 parallel workers;
the matrix--vector product is single-threaded.
Synthetic networks have Euclidean edge costs in meters.

\paragraph{Scaling performance.}
\label{sec:scaling}

We measure build and matvec time for random spatial networks from
$n = 100$ to $n = 100{,}000$ zones on a square of side
$\sqrt{n} \times 5{,}000$\,m. Here $\bar{c}$ denotes the mean edge
cost of the network, a characteristic local spacing scale;
for these networks $\bar{c} \approx 10{,}000$\,m.
Parameters: $\rho_0 = 10{,}000\text{\,m} \approx 1\bar{c}$,
$\gamma = 2$, $\alpha = 1.5$, $f(c) = (c + 1000)^{-2}$.
Fitting $t = a \cdot n \log n$ yields $R^2 = 0.998$ for total
build and $R^2 = 0.995$ for matvec, confirming $O(n \log n)$ scaling
(SI~Table~S1, Figure~\ref{fig:scaling}).
Activity vectors are uniform ($\mathbf{x} = \mathbf{1}$).
At $n = 10{,}000$ the hierarchical method is $30\times$ faster than the
dense $O(n^2)$ baseline.

\begin{figure}[t]
  \centering
  \includegraphics[width=\textwidth]{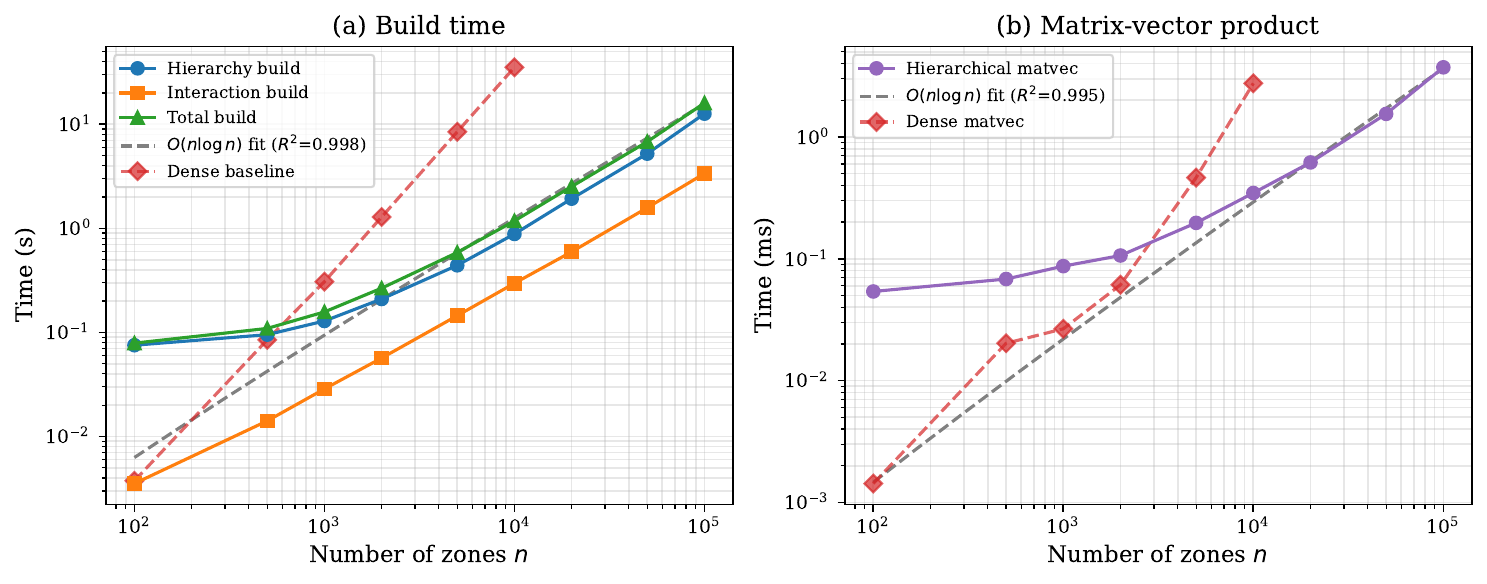}
  \caption{(a)~Build time vs.\ number of zones with $O(n \log n)$ fit.
    (b)~Matvec time with dense baseline comparison.
    \textit{Alt text:} Two log-log plots. Panel~(a) shows hierarchy build
    time and interaction build time versus number of zones from 100 to
    100,000, with an $O(n \log n)$ fit line and a dense baseline that grows
    much faster. Panel~(b) shows matrix-vector product time versus number
    of zones with an $O(n \log n)$ fit and dense baseline comparison.}
  \label{fig:scaling}
\end{figure}

\paragraph{Approximation error vs.\ parameters.}
\label{sec:error-results}

A systematic parameter sweep across 1{,}800 configurations
(5~base radii $\times$ 5~overlap factors $\times$ 3~increase factors
$\times$ 4~interaction functions $\times$ 6~networks) confirms that
the overlap factor $\alpha$ is the primary accuracy knob: increasing
$\alpha$ from 1.0 to 3.0 reduces median error by ${\sim}7\times$
(SI~Table~S2, Figure~\ref{fig:error-overlap}).
The joint dependence on base radius and overlap factor is shown in
SI~Fig.~S1.

\begin{figure}[t]
  \centering
  \includegraphics[width=\linewidth]{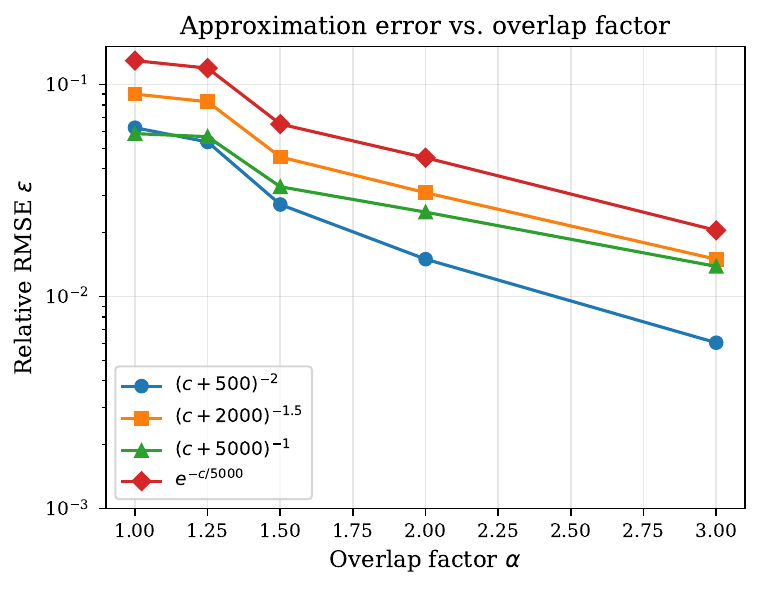}
  \caption{Relative RMSE $\varepsilon$ vs.\ overlap factor for four interaction
    functions (20$\times$20 grid, $\gamma = 2$, $\rho_0 = 4{,}000$\,m).
    \textit{Alt text:} Line plot of relative RMSE versus overlap factor
    for four distance-decay kernels on a 20$\times$20 grid. All four
    curves decrease as overlap factor increases from 1.0 to 3.0, with
    diminishing returns at higher values.}
  \label{fig:error-overlap}
\end{figure}

With $\rho_0 = 3\bar{c}$ ($\bar{c} \approx 10{,}000$\,m,
giving $\rho_0 \approx 30{,}000$\,m) and $\alpha = 2.0$, relative error remains
below 2.5\% up to $n = 5{,}000$ (SI~Fig.~S2, SI~Table~S3).
For the tested kernels, steeper decay generally gives lower error, consistent with
the near-field/far-field analysis of Section~\ref{sec:error}
(SI~Fig.~S3). The hierarchical representation achieves up to
$10\times$ compression in stored entries compared to the dense
matrix (SI~Fig.~S4).

\subsection{Comparison with alternative methods}
\label{sec:baselines}

We compare the hierarchical operator with distance cutoff truncation
and Nystr\"om low-rank approximation on a 25{,}000-zone grid network
($158 \times 158$, spacing 1{,}000\,m, diameter 314\,km).
To compare methods independently of hardware and parallelization
strategy, we report \emph{total nodes explored}: the total number of
finite-distance entries discovered across all Dijkstra calls during
the build phase. For accuracy we report the relative RMSE
$\varepsilon = \|\mathbf{h}^{\mathrm{approx}} - \mathbf{h}^{\mathrm{dense}}\| / \|\mathbf{h}^{\mathrm{dense}}\|$.
Activity vectors are drawn uniformly at random ($x_j \sim U[0.1, 1.1]$) to avoid bias from homogeneous loads.

Distance cutoff computes exact shortest paths within a radius (10\%,
25\%, or 50\% of diameter) and zeros remaining entries.
Nystr\"om samples $m$ landmarks, computes an $n \times m$
cross-similarity matrix~$C$ and approximates the full matrix as
$C\, W^{+} C^{\!\top}$~\cite{williams2001nystrom}, where $W^{+}$ is
the pseudoinverse of the $m \times m$ landmark submatrix. This is
the standard (unmodified) Nystr\"om formulation; indefinite-kernel
extensions exist but were not tested. We use $m = 50$
to $2{,}000$.

Figure~\ref{fig:pareto} shows the error-vs-work Pareto frontier
across four kernels. The hierarchical operator ($\alpha = 2$) explores
approximately 7.0\,M nodes. On the steep kernel $(c+500)^{-2}$, it
achieves 5.1\% RMSE; Nystr\"om fails (${\sim}74\%$ even at
$m = 2{,}000$) because the narrow support defeats sparse landmark
sampling, while a 50\% cutoff achieves 0.7\% but requires $74\times$
more work. On the moderate kernel $(c+2000)^{-1.5}$, the hierarchical
operator gives 6.9\% RMSE; Nystr\"om at $m = 2{,}000$ reaches
comparable error (6.5\%) at $7\times$ the cost. On the shallow kernel
$(c+5000)^{-1}$, Nystr\"om is competitive---5.8\% RMSE at $m = 200$
(5.0\,M nodes) versus the operator's 5.3\%---and achieves 0.4\% at
higher landmark counts, the one regime where its low-rank assumption
holds. On the exponential kernel $e^{-c/5000}$, the operator gives
9.1\% RMSE; cutoff at 10\% diameter achieves 1.1\% at $6\times$ more
work, while Nystr\"om again fails ($> 57\%$ error up to $m = 200$).

\begin{figure}[t]
  \centering
  \includegraphics[width=\textwidth]{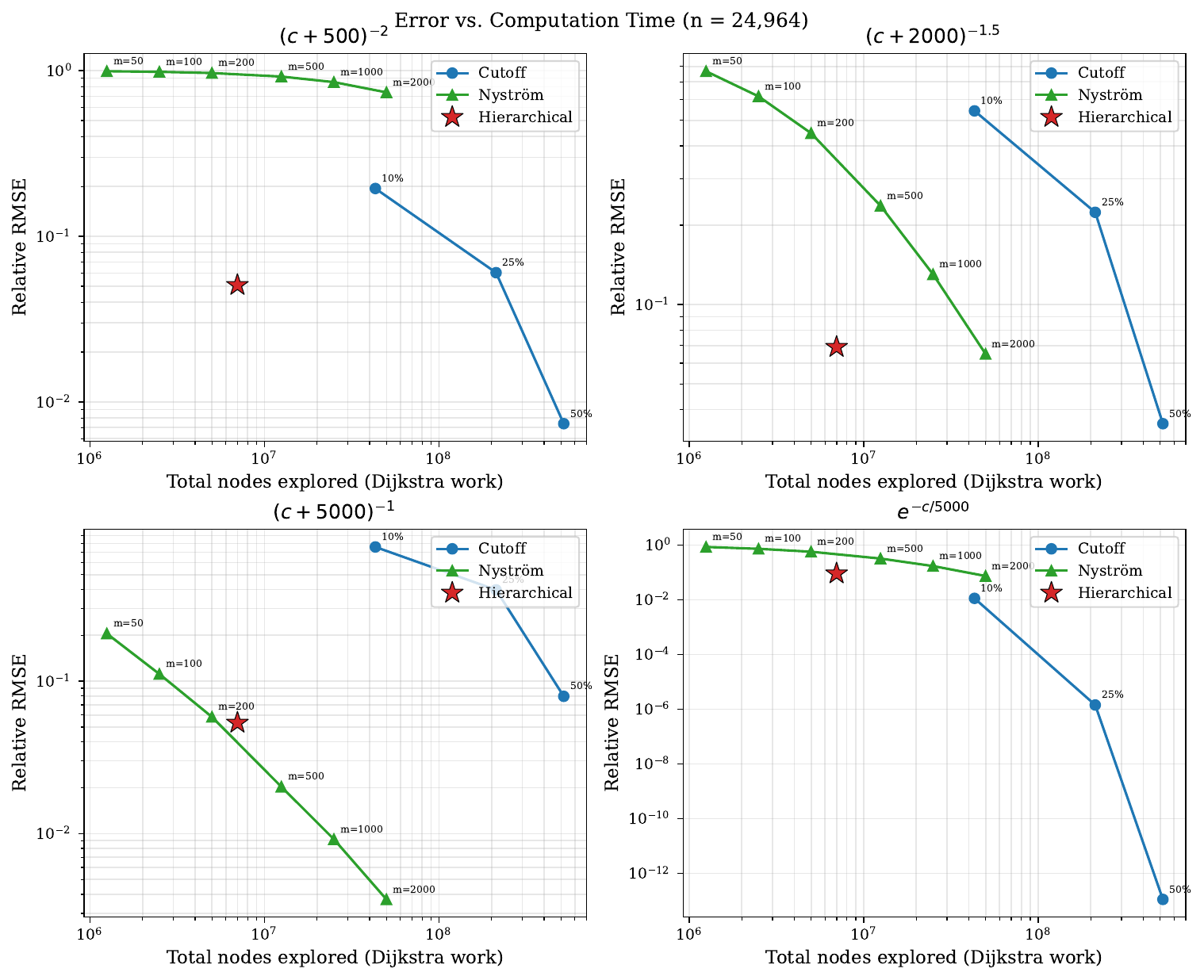}
  \caption{Error-vs-computational-work Pareto frontier on a
    25{,}000-zone grid network (spacing 1{,}000\,m, diameter 314\,km).
    The $x$-axis is total nodes explored (see text for definition);
    the $y$-axis is relative RMSE. Each panel corresponds to one
    interaction function. Cutoff variants are labeled by radius
    fraction of the network diameter; Nystr\"om variants by landmark
    count~$m$. The hierarchical operator (star) achieves competitive accuracy
    on steep kernels; on the smoothest kernel,
    Nystr\"om dominates the error-work frontier.
    \textit{Alt text:} Four Pareto plots on a 25,000-zone grid, one per
    interaction kernel. Each panel shows relative RMSE versus total nodes
    explored for three methods: distance cutoffs (circles), Nystr\"om
    landmarks (triangles), and the hierarchical operator (star). The
    hierarchical method achieves competitive error at substantially lower
    computational cost for steep and moderate kernels.}
  \label{fig:pareto}
\end{figure}

\subsection{Application to real-world networks}
\label{sec:london}

To validate beyond synthetic benchmarks, we apply the operator to
real road networks extracted from OpenStreetMap~\cite{openstreetmap}.
Two London networks are used: a 13\,km-radius drive network centered on
Charing Cross ($n = 61{,}432$ nodes, $\bar{c} = 9.1$\,s, diameter
3{,}181\,s) and a 5\,km-radius walk network ($n = 70{,}095$ nodes,
$\bar{c} = 25.8$\,s).
Edge costs are free-flow travel times (seconds) imputed from OSM speed
limit tags.

On the driving network with steep decay $(c+300)^{-2}$, the
hierarchical operator ($\alpha = 2$, $\rho_0 = 3\bar{c}$) achieves
8.0\% RMSE while exploring only 15.8\,M nodes---${\sim}54\times$
less Dijkstra work than the 25\% cutoff, with $5\times$ better
accuracy. On the very steep kernel $(c+60)^{-2}$, Nystr\"{o}m fails:
even $m = 1{,}000$ landmarks yield 13.5\% RMSE, while the hierarchical
operator achieves 8.6\% RMSE with ${\sim}4\times$ less work.
On broader kernels, Nystr\"{o}m dominates the error-work
tradeoff (SI~Fig.~S5); the hierarchical operator's advantage is
confined to steep decay functions where low-rank approximation breaks
down.

The Nystr\"{o}m failure is even more pronounced on the walking
network with $(c+60)^{-2}$, where the 60\,s offset
(${\approx}\,83$\,m at walking speed) produces steep local
gradients that sparse landmark sampling cannot capture.
Even $m = 1{,}000$ landmarks yield 50.2\% RMSE.
Distance cutoff at 50\% diameter achieves 1.8\% RMSE but explores
3.9\,B nodes. The hierarchical operator at
$\alpha = 5$ achieves comparable accuracy (2.5\% RMSE) while
exploring only 91.9\,M nodes---$43\times$ less Dijkstra work. Figure~\ref{fig:london-combined} summarizes
the London results.

\begin{figure}[t]
  \centering
  \includegraphics[width=\textwidth]{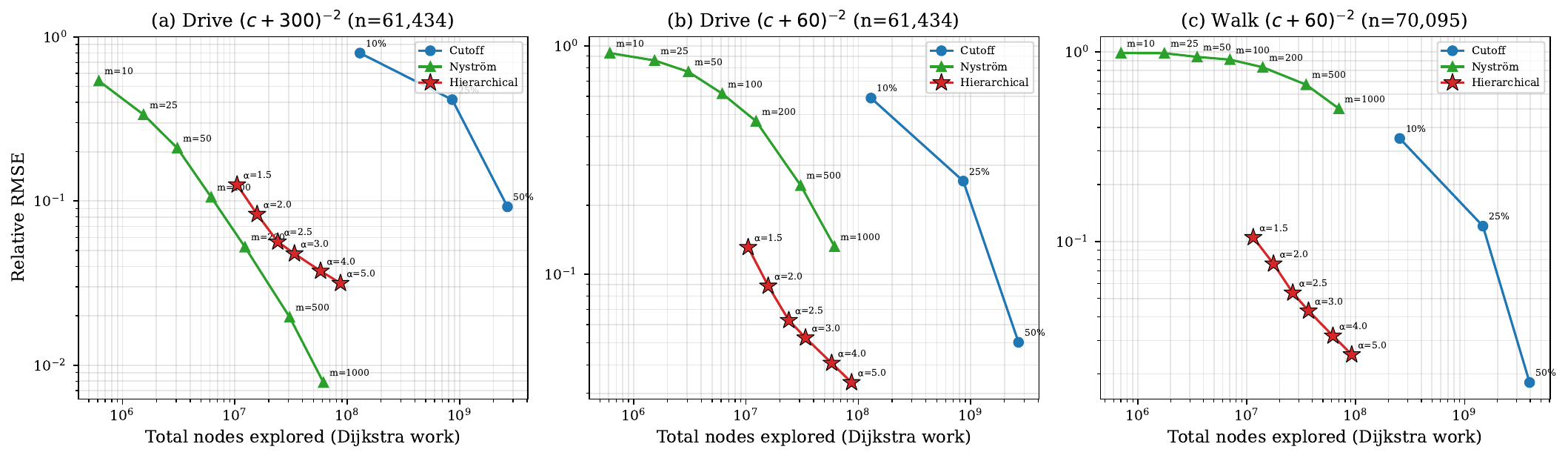}
  \caption{Error-vs-computational-work tradeoff on London networks.
    (a)~Drive network ($n = 61{,}432$), steep decay $(c+300)^{-2}$.
    (b)~Drive network, very steep decay $(c+60)^{-2}$.
    (c)~Walking network ($n = 70{,}095$), steep decay $(c+60)^{-2}$.
    The hierarchical operator (stars) achieves competitive accuracy at
    substantially lower computational cost than distance cutoffs (circles).
    Nystr\"{o}m (triangles) fails on the steep kernels.
    Full results for all four drive-network kernels are in SI~Fig.~S5.
    \textit{Alt text:} Three Pareto plots comparing methods on London
    networks. Panel~(a) shows the 61,432-node driving network with steep
    decay, panel~(b) shows the same network with very steep decay, and
    panel~(c) shows the 70,095-node walking network with steep decay. In
    all panels the hierarchical operator achieves low error at much lower
    computational cost than distance cutoffs. Nystr\"om approximation
    degrades severely on steeper kernels.}
  \label{fig:london-combined}
\end{figure}

\paragraph{Accessibility at metropolitan and national scales.}
\label{sec:pop-accessibility}

To demonstrate the operator at scale, we compute
population-weighted accessibility on the Great Britain driving network
(2{,}576{,}491~nodes, 3{,}017{,}232~edges) and population- and
employment-weighted accessibility on the London pedestrian network
(1{,}765{,}539 nodes, 1{,}917{,}400 edges).
Activity vectors are constructed by disaggregating Census Output
Area population counts~\cite{ons2023census,nrs2024census} to individual
building footprints in proportion to estimated floor area. For the
London walking network, each building is snapped to its nearest network
edge and its count split equally between the two endpoint nodes; for the
GB driving network, each building is snapped to its nearest network
node. Of the 65.0~million residents in the combined GB census,
59.7~million (91.7\%) are successfully assigned to network nodes;
the remainder reside in areas without matched OSM building footprints.

For the GB network, we use $f(c) = (c+300)^{-2}$
with offset matching a 5-minute drive. The hierarchy (10~layers,
$\rho_0 = 60$\,s, $\alpha = 2$) builds in ${\sim}4{,}300$\,s and
stores 720~million interaction entries; each matvec completes in
${\sim}700$\,ms. For the London walking network (7.7~million
residents, 4.2~million workplace population), a steeper decay
$f(c) = (c+60)^{-2}$ captures the shorter range of pedestrian
interaction. The hierarchy (11~layers, $\rho_0 = 3\bar{c}$,
$\alpha = 2$) builds in ${\sim}2{,}000$\,s and stores 82~million
interaction entries; each matvec completes in ${\sim}150$\,ms.
Computing dense ground truth for the full fields is impractical at
these scales; accuracy is validated on
the 61{,}432- and 70{,}095-node subgraphs (Section~\ref{sec:london})
and, for the GB network, directly at full scale via sampled exact
ground truth (Section~\ref{sec:error-anatomy}).
In both cases, a single hierarchy build supports arbitrary activity
vectors at marginal cost per evaluation.

Figure~\ref{fig:pop-accessibility} shows the resulting maps.
Panel~(a) reveals the expected national pattern: major urban
agglomerations appear as bright clusters, with accessibility tapering
into rural and highland areas. Panels~(b) and~(c) compare population
and employment accessibility on the London walking network: population
accessibility reflects London's distributed residential fabric, while
employment accessibility peaks sharply in the City and West End.

\begin{figure}[t]
  \centering
  \includegraphics[width=\textwidth]{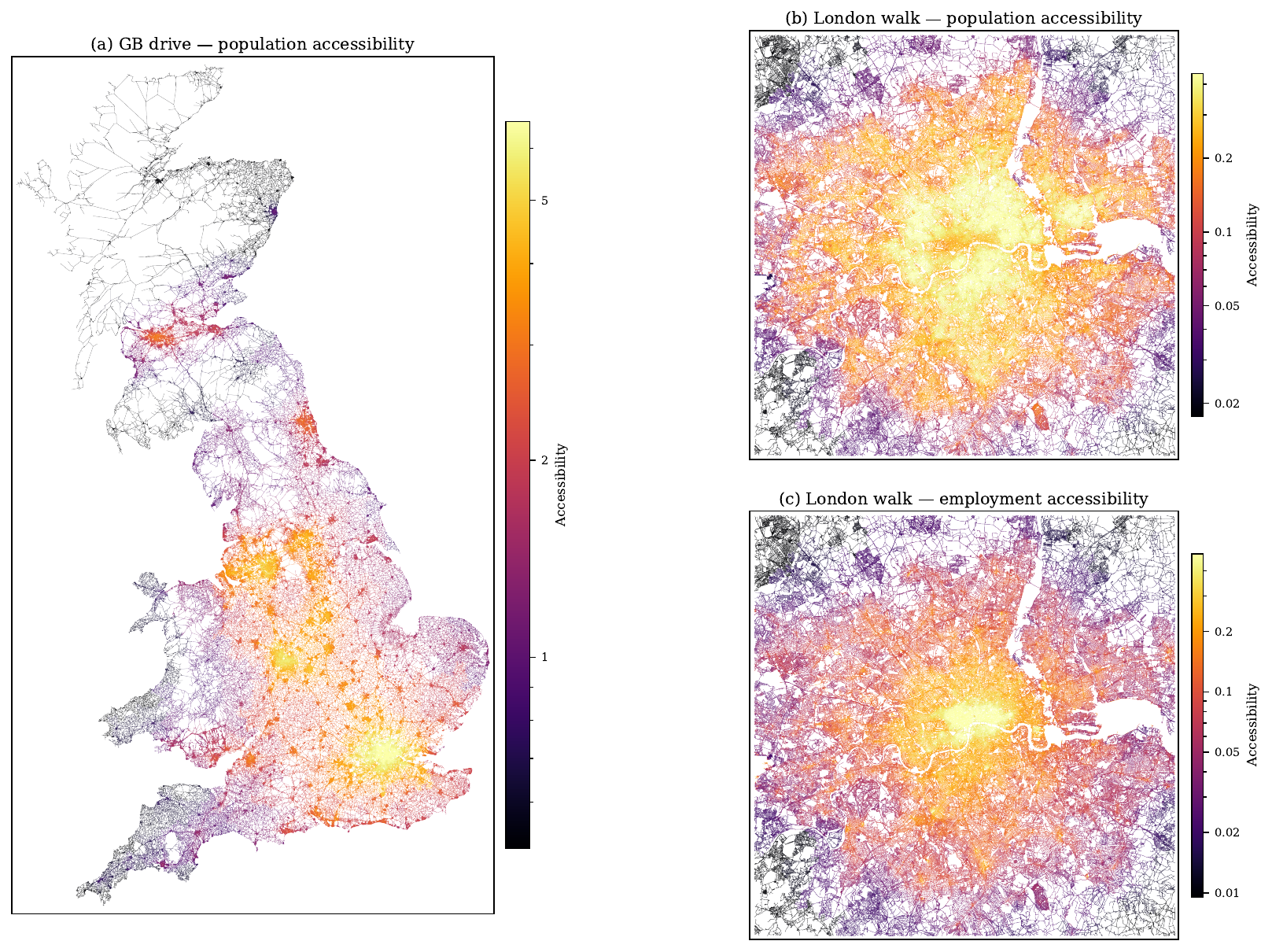}
  \caption{Accessibility maps computed by the hierarchical operator.
    (a)~Great Britain driving network ($n = 2{,}576{,}491$):
    population-weighted accessibility with $f(c) = (c+300)^{-2}$.
    (b,\,c)~London pedestrian network ($n = 1{,}765{,}539$):
    population and employment accessibility with $f(c) = (c+60)^{-2}$,
    walking speed 5\,km/h.
    Edges are encoded by the mean accessibility of their endpoint nodes
    (log scale, per-panel normalization). Activity vectors are Census
    counts disaggregated to buildings by floor area. Each hierarchy
    is built once; subsequent matvec evaluations take ${\sim}700$\,ms
    (GB) and ${\sim}150$\,ms (London).
    An interactive zoomable version is available at
    \url{https://hierx.github.io/hierx-paper/interactive_maps/index.html}.
    \textit{Alt text:} Three network maps showing accessibility computed
    with the hierarchical operator. Panel~(a) shows the Great Britain
    driving network with population-weighted accessibility, revealing high
    values in major urban centers. Panels~(b) and~(c) show the London
    pedestrian network with population and employment accessibility
    respectively; employment accessibility peaks sharply in central London
    while population accessibility is more evenly distributed. A log scale
    is used throughout.}
  \label{fig:pop-accessibility}
\end{figure}

\subsection{Error anatomy of the national-scale case study}
\label{sec:error-anatomy}

While dense ground truth for the full 2.58-million-node field is out of
reach, \emph{exact} values are computable at sampled nodes: each exact
accessibility value $h_i$ requires only one full Dijkstra run from
node~$i$. We exploit this to dissect, at full national scale, how the
operator's error depends on distance, on the decay kernel, and on the
activity distribution---and whether the error is systematically biased.
We sample 1{,}000 origins uniformly at random, compute exact network
distances to all nodes, and compare against the hierarchical effective
cost (the finest-layer representative cost of
Proposition~\ref{prop:exactly-once}) for 536{,}656 origin--destination
pairs stratified across log-spaced distance bins. The hierarchy is
rebuilt deterministically and reproduces the
Figure~\ref{fig:pop-accessibility}a accessibility field exactly, so
these statistics characterize that map directly.

\textbf{Distance error.} Pairs within the finest-layer cutoff
(${\lesssim}\,2$\,min) are resolved exactly
(Figure~\ref{fig:error-anatomy}a). Beyond it, the relative distance
error is nearly scale-free: per-bin RMSE stays at ${\sim}20\%$ across
three orders of magnitude of distance, with a small systematic
underestimate (mean $-2.6\%$). Because layer radii grow geometrically
with distance, the \emph{absolute} error grows proportionally to
distance while the \emph{relative} error remains roughly constant.

\textbf{Interaction bias.} These distance errors translate into
interaction errors asymmetrically: decay kernels are convex, so by
Jensen's inequality even unbiased distance errors inflate the expected
interaction. A second-order expansion for $f(c) = (c+\delta_0)^{-\beta}$
at $c \gg \delta_0$ predicts a mean relative interaction error of
${\approx}\,\tfrac{1}{2}\beta(\beta+1)\sigma^2 - \beta\mu$, where
$\sigma$ and $\mu$ are the standard deviation and mean of the relative
distance error in the far field (pairs beyond the finest-layer cutoff
$\alpha\rho_0$, where grouping first occurs; SI). With
$\sigma \approx 0.20$ and $\mu \approx -0.027$, this predicts biases
from ${+}6.5\%$ ($\beta = 1$) to ${+}17\%$ ($\beta = 2$), in good
agreement with the
observed far-field bias (Figure~\ref{fig:error-anatomy}b) and with the
accessibility-level bias at the sampled nodes: ${+}3.7\%$ for
$(c+3600)^{-1}$, ${+}7.2\%$ for $(c+1800)^{-1.5}$, ${+}14.9\%$ for the
map kernel $(c+300)^{-2}$, and ${+}16.5\%$ for $(c+60)^{-2}$
(SI~Table~S4). Note the reversal relative to small networks: when the
kernel's interaction mass fits within the finest exact layer
($\alpha\rho_0$), steeper kernels are \emph{more} accurate
(Section~\ref{sec:error-results}); at national scale the offsets place
most mass beyond that cutoff, where convexity dominates, and steeper
kernels are \emph{less} accurate. Because these per-pair errors enter
the accessibility sum weighted by interaction mass, the large
far-field values in Figure~\ref{fig:error-anatomy}b contribute
little---for the map kernel, pairs beyond 230\,min carry only
${\sim}1\%$ of the mass.

The total error of the GB population-accessibility field
(Figure~\ref{fig:pop-accessibility}a) at
the sampled nodes is $\varepsilon = 16.4\%$, and it decomposes into a
largely uniform multiplicative inflation (total interaction mass ratio
1.139) plus dispersion: calibrating away the global factor reduces the
error to 8.7\%, and the rank correlation between approximate and exact
accessibility is $\rho = 0.98$. The systematic component is thus
predictable in direction and magnitude, and applications sensitive to
absolute levels should calibrate it away or increase~$\alpha$; the
spatial pattern of the map is robust.

\textbf{Activity sparsity.} For activity concentrated on $k$ support
nodes, exact ground truth for the \emph{entire} network requires only
$k$ Dijkstra runs. Sweeping $k$ with population-weighted support
sampling (Figure~\ref{fig:error-anatomy}c, SI~Table~S5), the
full-network RMSE falls from 23.7\% at $k = 10$ to 16.1\% at
$k = 1{,}000$, approaching the 16.4\% of the ubiquitous census
population vector ($59.7$\,M persons spread over $1.77$\,M nodes),
while the bias stays at ${\approx}{+}15\%$ throughout. Sparse activity
layers---accessibility to a small set of destinations such as
airports, ports, or major hospitals---thus add variance, not bias:
error concentrates where few support nodes are nearby, as
their far-field contributions pass through coarse layers, while the
systematic component is a property of the kernel and the distance
distribution alone.
When the important locations are known in advance, this residual is
avoidable: resolving them at the finest layer---storing their exact
interactions directly, which the correction matrices reconcile with the
coarser layers by construction
(Proposition~\ref{prop:exactly-once})---makes their contributions exact,
at a cost of one shortest-path solve and one dense row per designated
location.

\begin{figure}[t]
  \centering
  \includegraphics[width=\textwidth]{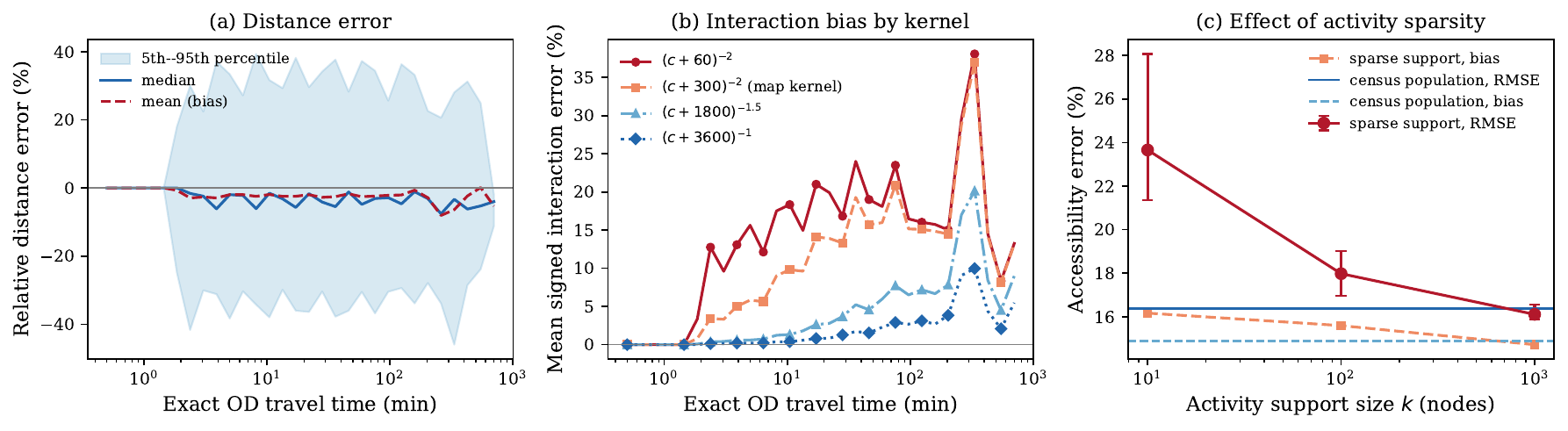}
  \caption{Error anatomy of the GB case study ($n = 2{,}576{,}491$,
    $\alpha = 2$, $\rho_0 = 60$\,s), against exact Dijkstra ground
    truth at 1{,}000 sampled origins.
    (a)~Relative distance error vs.\ exact OD travel time
    (536{,}656 sampled pairs): exact in the near field, then a
    nearly scale-free ${\sim}20\%$ spread with a small negative bias.
    (b)~Mean signed interaction error by distance bin for four kernels:
    convexity turns distance spread into systematically positive
    interaction bias, increasing with kernel steepness.
    (c)~Accessibility error vs.\ activity support size~$k$ (unit
    activity on $k$ population-weighted nodes; exact full-network
    ground truth from $k$ Dijkstra runs; whiskers span 3 trials):
    sparsity adds variance while the bias matches the census-population
    level (horizontal lines).
    \textit{Alt text:} Three panels. Panel~(a) shows relative distance
    error versus exact travel time on a log axis: zero error below two
    minutes, then a roughly constant band of plus or minus twenty
    percent with the mean slightly below zero. Panel~(b) shows mean
    signed interaction error versus travel time for four decay kernels:
    all curves are positive beyond the near field and steeper kernels
    have larger bias, exceeding thirty percent at the longest
    distances. Panel~(c) shows accessibility error versus activity
    support size: root-mean-square error decreases from about
    twenty-four percent at ten support nodes toward the census
    population reference near sixteen percent, while bias stays near
    fifteen percent for all support sizes.}
  \label{fig:error-anatomy}
\end{figure}
 \section{Discussion}
\label{sec:discussion}

The hierarchical operator approximates global interactions on large
networks by exploiting the same multi-scale structure that makes
these systems tractable in practice: local detail where it matters,
coarser resolution where it does not. Evaluation runs in quasi-linear
time; the hierarchy suggests an incremental update framework in which
update propagation is proportional to the number of affected pairs
$\Delta$ times the hierarchy depth $K$
(Supplementary Information).

The approach targets a specific problem class---shortest-path decay
kernels on general graphs---common in spatial interaction modeling but
not directly addressed by existing hierarchical matrix or fast multipole
methods. In settings with dense connectivity, rapidly changing global
shortcuts, or kernels that do not decay with distance, other techniques
may be more appropriate.

\textbf{When is the hierarchical operator the right choice?}
Simple distance cutoffs suffice when the decay function is steep enough
that 99\% of interaction mass lies within a modest radius, and for small
networks ($n < 10^3$) dense computation is fast and exact. Efficient
shortest-path query structures~\cite{geisberger2012exact,bast2007transit,abraham2011hub}
accelerate individual lookups but do not directly provide the aggregated
vector $\mathbf{h} = H\mathbf{x}$. Nystr\"om
approximation~\cite{williams2001nystrom} and random Fourier
features~\cite{rahimi2007random} assume positive semi-definite kernels,
which shortest-path interaction kernels on general graphs generally
are not. The hierarchical operator is most advantageous when:
(i)~the network is large ($n \gtrsim 10^4$);
(ii)~the decay function is shallow enough that distant interactions
carry meaningful mass;
(iii)~the application requires repeated evaluations of $H\mathbf{x}$
for varying $\mathbf{x}$, amortizing build cost; and
(iv)~the network may evolve, making precomputed static structures
expensive to maintain.
Our baseline comparison (Section~\ref{sec:baselines}) confirms this:
the hierarchical operator achieves 5--9\% RMSE; matching this accuracy
with distance cutoffs requires ${\sim}30\times$ more Dijkstra work,
and exceeding it can cost up to $74\times$ more. Nystr\"om approximation is competitive only for
the smoothest kernels; for steep decay it exhibits ${\sim}74\%$ RMSE
even at $m = 2{,}000$ landmarks, a limitation compounded by network
interventions that move the distance structure further from any
Euclidean embedding.

\textbf{Parameter sensitivity.} The overlap factor $\alpha$ is the most
important tuning parameter, with a ${\sim}7\times$ error reduction from
$\alpha = 1.0$ to $\alpha = 3.0$ (SI~Table~S2). The base radius
$\rho_0$ should be at least $3\times$ the mean edge cost;
with $\rho_0 = 3\bar{c}$ and $\alpha = 2$, error remains below 2.5\%
up to $n = 5{,}000$ (SI~Table~S3). The increase factor $\gamma = 2$ balances resolution and
computational cost.

\textbf{Topology effects.} Regular grids show the most predictable
behavior; random spatial networks exhibit higher error variance. The
London experiments (Section~\ref{sec:london}) confirm that the operator
handles organic, irregular topology well, achieving 8\% RMSE at
$\alpha = 2$ with ${\sim}54\times$ less Dijkstra work than a 25\%
cutoff. Practitioners should calibrate against dense baselines on
representative subnetworks before full-scale deployment.

\textbf{Interaction function dependence and bias.} The effect of kernel
steepness is scale-dependent. On small and moderate networks, steeper
decay functions produce \emph{lower} error, because interactions
concentrate in nearby zones captured exactly by the finest layer. At
national scale the ordering reverses
(Section~\ref{sec:error-anatomy}): the error budget is carried by
far-field pairs, where the convexity of the decay function converts
the roughly symmetric distance error into a systematically
\emph{positive} interaction bias that grows with steepness
(${\approx}\,\tfrac{1}{2}\beta(\beta+1)\sigma^2$ for power-law
exponent~$\beta$ and relative distance spread~$\sigma$). Practitioners
should expect the operator to overestimate accessibility on average,
by a predictable margin that can be calibrated away via the total
interaction mass, and should increase~$\alpha$ when absolute levels
matter.

\textbf{Hub-dominated and directed networks.} The structural
assumptions of Section~\ref{sec:complexity} can be violated by networks
that are entirely realistic in transport analysis, not only by
mathematical artefacts. Hub-and-spoke systems---public transport with
express services, airline networks, micromobility with docking
stations---concentrate long-range connectivity in a few nodes, so that
the ball of reachable nodes around a hub grows much faster than
polynomially in radius. This inflates the per-source search work $W_k$
and the stored degree $d_{\mathit{avg}}$, eroding the complexity
advantage, and shortcut-rich metrics also blur the separation of
scales on which the grouping rests. A partial mitigation is that the
hierarchy operates on the network metric itself rather than on a
Euclidean embedding, so it degrades gracefully where Euclidean-based
methods break outright; hub-aware representative selection (placing
representatives at hubs) is an open direction. Directed graphs
(one-way streets, bus routes) pose a separate, structural limitation:
the current formulation assumes symmetric costs, since grouping and
correction use a single cost per pair. Supporting asymmetric costs
would require separate forward and backward cost structures
(see Supplementary Information).

\textbf{Limitations.} The open-source HierX package does not yet include
dynamic update functionality; the complexity analysis
(Section~\ref{sec:complexity}) establishes an $O(\Delta \cdot K)$
update framework, and preliminary prototype results support
feasibility, but a full empirical validation will be reported
separately. The Python implementation stores intermediate objects that
inflate memory; the intrinsic sparse structure contains ${\sim}10\times$
fewer entries than the dense matrix, a compression advantage a compiled
implementation would fully realize. Additional discussion of directed
graphs, parallelization, and asymmetric costs is provided in the
Supplementary Information.

\textbf{Future directions.} The most pressing next step is implementing
and validating the dynamic update scheme on real infrastructure
scenarios, including the impact-driven propagation theory outlined in
Section~\ref{sec:complexity}. Representative selection could be improved
via graph coarsening or centrality-based criteria. On the build side,
contraction-hierarchy or hub-label queries could replace Dijkstra for
the coarse layers, where few costs between widely separated
representative pairs are needed (one-to-one queries), while
cutoff-limited Dijkstra remains the right tool for the fine layers'
dense one-to-many searches; this division of labor could substantially
reduce build-phase constants. On the accuracy side, the nested group
structure invites $\mathcal{H}^2$-inspired refinements
(Section~\ref{sec:related}): translating cost information between
nested layers instead of recomputing it, and softening the hard
single-representative assignment into a weighted average (convex
combination) of nearby representatives, which in variable-order
$\mathcal{H}^2$-matrices
improves approximation orders under certain conditions and could
reduce the far-field grouping error here. Applying the operator to networks
beyond transport---social, biological, or communication
networks---would test the generality of the hierarchical approximation.
 \section{Conclusion}
\label{sec:conclusion}

This work establishes a hierarchical sparse-plus-correction operator as
an alternative to dense distance-decayed interaction matrices on large
sparse graphs. The operator adapts hierarchical interaction ideas from
fast multipole methods, hierarchical matrices, and multi-scale cellular
automata to shortest-path-based interaction kernels on general graphs.

Systematic benchmarks on networks up to 100{,}000 zones confirm
$O(n \log n)$ scaling for both build ($R^2 = 0.998$) and
operator application ($R^2 = 0.995$). Application to real networks---a 2.58-million-node Great Britain
driving network and a 1.77-million-node London walking
network---demonstrates practical national- and metropolitan-scale
performance: a one-time hierarchy build of roughly an hour yields an
operator that evaluates each accessibility field in well under one
second. With $\rho_0 = 3\bar{c}$ and $\alpha = 2$, relative RMSE remains
below 2.5\% up to $n = 5{,}000$ (SI~Table~S3). Head-to-head comparison with distance cutoff truncation and
Nystr\"om low-rank approximation on a 25{,}000-zone network
shows that, for steep and
moderate decay kernels, it achieves 5--9\% RMSE at a fraction of the
Dijkstra work required by either baseline; for the smoothest kernels
Nystr\"om can achieve lower error at proportionally higher cost, but the
hierarchical operator remains the most efficient option at comparable
accuracy levels.

The complexity analysis establishes an incremental update framework in
which update propagation is proportional to the number of affected
pairs $\Delta$ times the hierarchy depth $K$, rather than to overall
network size. A full empirical validation of this framework is deferred
to future work.

The open-source HierX package, archived datasets, and
Docker-based reproducibility pipeline accompany the paper.

\section*{Materials and Methods}

\subsection*{Data and Code Availability}
The HierX Python package implementing the hierarchical
operator is available as open-source software at
\url{https://github.com/hierx/hierx}.
Benchmark scripts, pre-computed results, and data pipelines needed to
reproduce every figure and table are provided in the companion repository at
\url{https://github.com/hierx/hierx-paper}.
A self-contained Docker image allows full reproduction of all figures
and benchmarks without local installation; see the repository README
for instructions.
The London and Great Britain network snapshots,
pre-computed accessibility arrays, and Census-derived node activity
vectors used in Section~\ref{sec:pop-accessibility} are
archived on Zenodo (\url{https://doi.org/10.5281/zenodo.19062193}) for
long-term reproducibility. These files can also be regenerated from
OpenStreetMap~\cite{openstreetmap} using the included scripts, though
OSM data may change over time.
Census 2021 population (TS001) and workplace (WP001) data for England
and Wales are sourced from the Office for National
Statistics~\cite{ons2023census} under the Open Government Licence;
Scotland Census 2022 population data are sourced from National Records
of Scotland~\cite{nrs2024census}.

\subsection*{AI Disclosure}
Portions of the code and manuscript preparation were assisted by
Claude models (Anthropic; Claude Opus 4.6, Claude Opus 4.8, and
Claude Fable 5). The AI
was used for code generation, benchmark scripting, and editorial
suggestions. All scientific content, methodology, and conclusions are
the sole responsibility of the authors.

\section*{Funding}
This work was supported by the Swedish Transport Administration (Trafikverket), grant TRV 2024/31308.

\section*{Author Declarations}
The authors declare no competing interests.
 
\bibliographystyle{plainnat}

\end{document}